\documentclass[12pt,a4paper]{article}
\usepackage{latexsym,amssymb,amsmath,amsthm}
\usepackage[cp1251]{inputenc}

\newtheorem{theorem}{Theorem}
\newtheorem{lemma}{Lemma}

\newcommand{\rk}{{\rm rk_{\vee}\,}}
\newcommand{\rs}{{\rm rk_{+}\,}}
\newcommand{\rR}{{\rm rk_{\RR+}\,}}
\newcommand{\BB}{\mathbb{B}}
\newcommand{\NN}{\mathbb{N}}
\newcommand{\RR}{\mathbb{R}}
\newcommand{\OR}{\mathsf{OR}}
\newcommand{\SUM}{\mathsf{SUM}}
\newcommand{\ov}{\overline}
\newcommand{\Y}{{\cal Y}}

\begin{document}
\title{Upper bounds for the monotone rank of the unique disjointness matrix}
\date{}
\author{Igor S. Sergeev\footnote{e-mail: isserg@gmail.com}}

\maketitle

\begin{abstract}
It is shown that the $\OR$-rank (covering rank) of the $2^n \times 2^n$
unique disjointness matrix is $n^{O(1)}(3/2)^n$,
hence the known lower bound $1.5^n$ turns out to be essentially tight.
By the way, an upper bound $1.89^n$ is obtained for the $\SUM$-rank (partition rank) of this matrix.
\end{abstract}

\section{Introduction}

 The {\it unique disjointness matrix} $U_n$ (often denoted as $UDISJ_n$)
 is a partially defined $2^n \times 2^n$ boolean matrix.
 Its rows and columns are numbered by subsets of $[n]$, and the entries are defined as
\[ U_n[a,b] = \begin{cases} 1, & a \cap b = \varnothing \\ 0, & |a \cap b| = 1 \\ *, & |a \cap b| > 1 \end{cases}. \]
Further, when numbering the rows and columns of the matrix, in addition to subsets $a \subset [n]$,
we will also use their characteristic boolean vectors of length~$n$.

Although the unique disjointness matrix arises indirectly in Razborov's work~\cite{raz92e}
on a problem in the field of communication complexity, today results on the monotone rank
of this matrix find applications in the theory of optimization of linear programs; see, e.g.,~\cite{bfps15e,kw15e}.
However, many problems in these two areas (communication complexity and linear program optimization) are closely related,
are reduced to each other, or are studied using similar approaches; see, e.g.,~\cite{gjw18e}.

The {\it rank} of a matrix $A$ over a semiring $(S,+,\cdot)$ is defined as the minimum number $t$ of terms in a decomposition
\[ A = R_1 + \ldots + R_t, \qquad R_i =
   \begin{pmatrix} a_{i,1} \\ \vdots \\ a_{i,m} \end{pmatrix} \cdot \begin{pmatrix} b_{i,1} & \ldots & b_{i,n} \end{pmatrix},
   \quad  a_{i,j}, b_{i,k} \in S. \]
In the boolean case, $R_i$ are all-ones submatrices of $A$;
they are usually called (combinatorial) {\it rectangles}.
For partially defined boolean matrices, a rectangle is conveniently defined as a submatrix without zeros.

Let $\BB=\{0,\,1\}$. The main types of monotone rank of a (boolean partially defined) matrix $A$:

--- the rank $\rk A$ over the boolean semiring $(\BB, \vee, \wedge)$, also denoted as $\OR$-rank, the covering rank;

--- the rank $\rR A$ over the real semiring $(\RR_+, +, *)$, also known as the nonnegative rank;

--- the rank $\rs A$ over the integer semiring $(\NN_{\ge0}, +, *)$, also denoted as the $\SUM$-rank, the partition rank.

These quantities obey the (obvious) inequalities
\[ \rk A \le \rR A \le \rs A. \]

Covering and partition ranks characterize the communication complexity of a matrix:
the nondeterministic communication complexity of a matrix~$A$ is $\lceil \log_2 \rk A \rceil$,
and the unambiguous nondeterministic communication complexity is $\lceil \log_2 \rs A \rceil$.
In optimization problems with linear constraints, the nonnegative real rank is studied.
Due to the fundamental nature of the matrix, the question of its rank can be considered as being of independent interest.

Among the well-known completions of the matrix $U_n$ are the
Sierpi\'nski matrix\footnote{With entries $S_n[a,b] = (a \cap b = \varnothing)$.}~$S_n$ (the disjointness matrix)
and the complement of the boolean version of the Sylvester matrix\footnote{With entries $H_n[a,b] = |a \cap b| + 1 \bmod 2$.}~$H_n$.
These matrices have full $\OR$-rank, $\rk S_n = \rk H_n = 2^n$.

As usual, the {\it weight} of a (partially defined) boolean
matrix is defined as the number of ones in it.
An elegant argument from~\cite{kw15e} proves that any rectangle in~$U_n$
has weight~$\le 2^n$, so $\rk U_n \ge 1.5^n$, since the weight of the matrix is~$3^n$.
A simple construction from~\cite{we15e} shows that $\rk U_n = O(\sqrt3^{\:n})$.

In~\cite{bvg26e}, it was established that the method~\cite{we15e} yields an exact value
for the $\OR$-rank of~$U_n$ for $n \le 6$. However, below we will show that the lower bound~\cite{kw15e}
is in fact tight: $\rk U_n = n^{O(1)}(3/2)^n$.
Note that this result echoes a similar estimate for the extension complexity of the matching polytope
in~\cite{szu26e}\footnote{The solution to the latter problem involves estimating the nonnegative
real rank of a~somewhat similar matrix. An implicit analogy with the bounding
the rank of the unique disjointness matrix is explained in~\cite{sin17e}.}.
First, we will check that the $\SUM$-rank of~$U_n$ is also not full: ${\rs U_n \prec 1.89^{n}}$.

\section{Upper bound for the $\SUM$-rank}

Further, $||x||$ denotes the weight of a boolean vector $x$.
Recall that a {\it code} with distance $d$ (abbreviated, a $d$-code)
in $\BB^n$ is a set $X$ such that $|| x- y || \ge d$ for any distinct $x,y \in X$.

\begin{lemma}\label{ham}
The cube $\BB^n$ can be partitioned into $2n$ parts that are $3$-codes.
\end{lemma}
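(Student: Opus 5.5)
The plan is to use the classical syndrome construction behind Hamming codes. Let $k$ be the least integer with $2^k \ge n+1$, so that $2^k \le 2n$. Assign to each coordinate $i\in[n]$ a distinct nonzero vector $h_i \in \BB^k$; this is possible since $n \le 2^k-1$. For $x \in \BB^n$ define the syndrome
\[ s(x) = \sum_{i=1}^n x_i h_i \in \BB^k \]
(sum over $GF(2)$). We partition the cube into the fibres $X_s = \{x : s(x) = s\}$, $s \in \BB^k$. This gives $2^k \le 2n$ parts. Since empty parts may be added or split freely (a subset of a $3$-code is a $3$-code), we obtain exactly $2n$ parts, at least once $2^k \ge 2n$ or there is room to split; for small degenerate $n$ this is checked directly.

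The key step is to verify that each fibre is a $3$-code. Take distinct $x,y \in X_s$ and put $z = x \oplus y \ne 0$. Then $s(z) = s(x)+s(y) = 0$, by linearity of $s$. If $\|z\| = 1$, say $z = e_i$, then $s(z) = h_i \ne 0$, a contradiction. If $\|z\| = 2$, say $z = e_i + e_j$ with $i\ne j$, then $h_i + h_j \ne 0$ because the $h_i$ are distinct, again a contradiction. Hence $\|x - y\| = \|z\| \ge 3$. Each fibre is a coset of the linear code $\ker s$ (a shortened Hamming code), and so has minimum distance at least $3$.

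There is no real obstacle here. The only point requiring care is the count: $2^k \le 2n$ follows from the minimality of $k$, because $2^{k-1} < n+1$ gives $2^{k-1} \le n$. Beyond that, the argument only needs the observation that padding with empty classes, or subdividing classes, preserves the $3$-code property. This lets us state the count as exactly $2n$, or "at most $2n$", whichever the later application uses.
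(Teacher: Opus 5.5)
Your proposal is correct and is essentially the paper's argument. The paper takes the $2^m$ translates $C_i=C_0+e_i$ of the Hamming code in $\BB^N$, with $N=2^m-1\ge n$ minimal, and intersects them with the subcube $\BB^n\times\{\vec 0\}$; this gives exactly your syndrome fibres (cosets of the shortened Hamming code), with the same count $2^m\le 2n$, and since the paper only needs at most $2n$ parts, your padding/splitting remark is harmless but unnecessary.
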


\proof Let $N$ be the nearest number of the form ${2^m-1}$ to $n$ from above.
We use the standard partition of the cube $\BB^N$ into balls of radius~1
centered at the points of the Hamming $3$-code $C_0$.
Denote by $C_i$ the set obtained from~$C_0$ by inverting $i$th coordinates
of the points~--- this is also a~$3$-code. By construction,
\[ \BB^N = C_0 \sqcup C_1 \sqcup \ldots \sqcup C_N. \]
Now it is easy to obtain the partition
\begin{equation}\label{code}
\BB^n = \Gamma_0 \sqcup \Gamma_1 \sqcup \ldots \sqcup \Gamma_N, \qquad \Gamma_i = \{ c \in \BB^n \mid (c,\vec 0) \in C_i \}.
\end{equation}
It remains to note that $N \le 2n-1$. \qed

Recall a simple relation for binomial coefficients:
\begin{equation}\label{bin}
\frac1{n} \cdot 2^{nH(k/n)} \preceq C_n^k \preceq 2^{nH(k/n)}, \qquad \text{as } \min\{k,\,n-k\} \to \infty,
\end{equation}
where $H(x) = -x\log_2x - (1-x)\log_2(1-x)$ is the {\it binary entropy function}
defined on the interval $[0,\,1]$; at the endpoints, $H(0)=H(1)=0$ by continuity.

\begin{theorem}\label{upsum}
$\rs U_n \preceq n^3\left(\frac3{\sqrt[3]4}\right)^{n} \prec
1.89^n.$
\end{theorem}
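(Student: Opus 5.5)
The plan is to build an explicit partition of the ones of $U_n$ into rectangles, where a rectangle is a submatrix without zeros and $*$-entries may be covered any number of times. Each rectangle will be indexed by a \emph{center} pair $(c,d)$ of disjoint sets together with a \emph{free part}. The center determines a sub-box of $\BB^n\times\BB^n$. The free part lets the row set range over all subsets of some set $S$ and the column set over all subsets of a set $T$ with $S\cap T$ small. The decisive point is uniqueness: every disjoint pair $(a,b)$ must be assigned to exactly one rectangle. The count of rectangles will then be roughly
\[
3^n \,\Big/\, \text{(typical number of ones in a rectangle)},
\]
and the target $(3/\sqrt[3]4)^n$ says that the typical rectangle should contain about $4^{n/3}=2^{2n/3}$ ones. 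This suggests organizing everything by weight levels. Consider ones $(a,b)$ whose ``free'' coordinates (those $i$ with $i\notin a\cup b$) make up about a third of $[n]$. A rectangle whose rows vary freely over $n/3$ coordinates and whose columns vary freely over another $n/3$ coordinates has area $2^{2n/3}$. The remaining $n/3$ coordinates are where overlaps $a\cap b\neq\varnothing$ may occur, and these must be forced to have size $0$ or $\ge 2$.

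First, I would fix the rectangle shape. Fix a set $D\subseteq[n]$ of ``shared'' coordinates. Let rows be the sets $a$ with $a\setminus D$ prescribed and $a\cap D$ ranging over one class $X$. Let columns be the sets $b$ with $b\setminus D$ prescribed and $b\cap D$ ranging over one class $Y$. The rectangle is legal provided $|x\cap y|\neq1$ for all $x\in X$, $y\in Y$. This is where Lemma~\ref{ham} enters. I would use the partition $\BB^n=\Gamma_0\sqcup\dots\sqcup\Gamma_N$ into $3$-codes (restricted to coordinates in $D$, or applied to the vector $a\oplus b$) to choose a canonical representative for each pair. The distance-$3$ property should guarantee two things: two different codewords of the same class never produce an intersection of size exactly $1$, and each disjoint pair $(a,b)$ is captured by exactly one codeword in exactly one class. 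The factor $2n$ from the lemma, together with the choice of the weight parameters of $a$, $b$ and $D$ (polynomially many choices), accounts for the $n^3$ in the bound.

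Second, I would count. Fix a class $\Gamma_i$ and weights $k=|a\setminus D|$, $l=|b\setminus D|$, $m=|D|$. The number of rectangles is a product of binomial coefficients divided by the code size factor, and each rectangle has area about $2^{m}$ times a code-density factor. Applying \eqref{bin}, the total is $2^{nF}$ for an entropy expression $F$ in the normalized weights. Maximizing $F$ over the simplex should give the optimum at the ``thirds'' point, with value $\log_2 3-\tfrac23$. That value equals $\log_2\bigl(3/\sqrt[3]4\bigr)$, and $3/\sqrt[3]4\approx1.8899<1.89$.

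I expect the main obstacle to be the first step: designing the assignment so that it is simultaneously a product set (a genuine rectangle) and an exact partition of the ones. Covering rectangles of the form ``$a\subseteq S$, $b\subseteq\overline S$'' overlap heavily, and the $3$-code is exactly the tool for breaking ties canonically while staying rectangular. I would first try letting the canonical choice depend only on $a\cup b$ modulo the code: take the unique $\Gamma_i$-codeword within distance $1$ of the relevant vector. Then I would check that the resulting row and column sets factor, using $*$-entries (intersections of size $\ge2$) to absorb the cross terms.
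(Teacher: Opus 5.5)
There is a genuine gap. The step you yourself flag as the main obstacle, an explicit family of rectangles that partitions the ones exactly, is the whole content of the proof, and you never construct it. Your two descriptions of a rectangle also disagree: first rows and columns range freely over disjoint sets $S,T$, then everything outside $D$ is prescribed. Free coordinates contribute no $*$-entries, and a partition by rectangles without $*$-entries is a partition of the disjointness matrix $S_n$, which needs $2^n$ pieces. So everything hinges on the code part. You give no rule sending each disjoint pair $(a,b)$ to a unique rectangle whose preimage is a product set; the overlap problem you mention is exactly what blocks this.

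The asserted legality is also false as stated. Distance $\ge 3$ alone does not prevent $|x\cap\ov y|=1$: take $|x\setminus y|=1$ and $|y\setminus x|=2$. And if $x\subsetneq y$, then $(x,\ov y)$ is an extra one that gets counted twice. What is needed is that all rows of a rectangle have the \emph{same weight}. Then distinct codewords differ in an even number $\ge 4$ of positions, so $|x\cap\ov y|=|x\setminus y|\ge 2$.

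Your closing hint of keying on $a\cup b$ is the right one, but it becomes canonical only after fixing weights. The paper splits $U_n$ into blocks $U_n^{p,q}$ by row weight $p$ and column weight $q$. A one $(a,b)$ determines $S=a\cup b$ with $|S|=p+q$, and the ones with rows and columns inside $S$ are exactly the pairs $(a,S\setminus a)$, so grouping by $S$ is overlap-free. For each $S$ the rectangles are $R_i\times\ov{R_i}$, where $R_i$ consists of the weight-$p$ vectors in the $i$th class of Lemma~\ref{ham} on $\BB^{S}$, and complements are taken inside $S$. This costs at most $2n\,C_n^{p+q}$ rectangles per block.

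That count is about $2^n$ when $p+q\approx n/2$, so the second missing ingredient is to take, per block, the better of this and the trivial partition by rows or columns, of size $\min\{C_n^p,C_n^q\}\le C_n^{(p+q)/2}$. The minimum is $\preceq n\,C_n^{n/3}$, with the crossover at $p+q=2n/3$. Summing over the $O(n^2)$ blocks gives $n^3(3/\sqrt[3]4)^n$. The ``thirds point'' comes from balancing two constructions, not from optimizing the entropy of a single one.
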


\proof Partition matrix $U_n$ into submatrices $U_n^{p,q}$ with
fixed row weights $p$ and columns weights $q$. We restrict ourselves to the case
${p+q \le n}$ (otherwise, the submatrix contains no ones). For each matrix $U_n^{p,q}$,
we construct a partition into rectangles in one of two ways.

In the first way, apply a trivial partition by rows or by columns, depending on
whether $p$ or $q$ is farther from $n/2$. The cardinality of such a partition does not exceed
\[ \min\{ C_n^p,\, C_n^q \} \le C_n^{(p+q)/2}. \]

In the second way, a matrix $U_n^{p,q}$ is first partitioned into submatrices~$U^S$,
whose rows and columns belong to (all possible) sets~$S$ of cardinality~$p+q$.
To construct a partition of a matrix $U^S$, apply partition~(\ref{code})
of the boolean cube $\BB^{p+q}$ from Lemma~\ref{ham}.
From any code set~$\Gamma_i$, select the subset $R_i$ of all vectors of weight $p$.
Then $R_i \times \ov{R_i}$ defines the desired partition rectangle,
where $\ov X$ denotes the elementwise inverse of the set $X$.
The fact that the rectangle is legal is ensured by the property of $R_i$ to be a 3-code.
Indeed, since $|| x - y || \ge 3$ for distinct $x,y \in R_i$,
vectors $x$ and $\ov y$ have at least two common ones.
The cardinality of the proposed partition of the matrix $U_n^{p,q}$ does not exceed $2n C_n^{p+q}$.

By (\ref{bin}), we obtain
\[ \rs U_n^{p,q} \le \min\{ C_n^{(p+q)/2},\, 2n C_n^{p+q} \} \preceq n C_n^{n/3} \le n2^{nH(1/3)} = n\left(\frac3{\sqrt[3]4}\right)^{n}, \]
since $H(1/3) = \log_23-2/3$.

It remains to sum the ranks of all $O(n^2)$ submatrices $U_n^{p,q}$. \qed

\section{Upper bound for the $\OR$-Rank}

A technical ingredient for the proof is the Sapozhenko---Lovasz---Stein lemma
on the rank of a gradient covering~\cite{sap72e,ste74e}, see also~\cite[\S2.6]{juk11e}.
In a convenient and slightly weakened form, it is stated as follows.

\begin{lemma}\label{grad}
Let ${\cal F} \subset 2^X$.
If the elements of $X$ are uniformly covered by sets from the family $\cal F$,
then $\cal F$ contains a covering of $X$ of size $w \le {(1+\ln |X|)|X|/s}$,
where $s$ is the average cardinality of a set in $\cal F$.
\end{lemma}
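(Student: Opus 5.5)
The plan is to run the greedy algorithm and bound the number of steps with the standard averaging argument. Write $N=|X|$ and $m=|{\cal F}|$. By hypothesis every element of $X$ lies in exactly $d$ sets of $\cal F$. Counting incidences gives $dN = sm$, so $d/m = s/N$. If $s=N$, then $\cal F$ contains $X$ itself and one set suffices, so assume $s<N$.

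The key step is the averaging estimate. Suppose that after some greedy steps a set $Y\subseteq X$ of $u=|Y|$ elements is still uncovered. Summing $|F\cap Y|$ over $F\in{\cal F}$ counts each element of $Y$ exactly $d$ times. Hence the average of $|F\cap Y|$ is $du/m = su/N$, and some set covers at least $su/N$ new elements. The greedy rule picks the set covering the most uncovered elements. Writing $u_k$ for the number of uncovered elements after $k$ steps, we get
\[ u_k \le u_{k-1}\Bigl(1-\frac sN\Bigr), \qquad\text{so}\qquad u_k \le N\Bigl(1-\frac sN\Bigr)^k \le N e^{-ks/N}. \]
Moreover, as long as $u_{k-1}\ge1$ the greedy step covers at least one new element, so $u_k\le u_{k-1}-1$.

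Then I split the greedy run into two phases. In phase one we run greedy until $u_k \le N/s$. In phase two we finish, spending at most one step per remaining element. Let $k_1$ be the first index with $u_{k_1}\le N/s$. If $k_1\ge 1$, then $N/s< u_{k_1-1}\le Ne^{-(k_1-1)s/N}$, which gives $k_1 < 1+(N/s)\ln s$. The total size of the greedy cover is therefore
\[ w \le k_1 + u_{k_1} < 1 + \frac Ns\ln s + \frac Ns . \]
It then remains to compare this with $(1+\ln N)N/s$. One uses $\ln s\le\ln N$, and the slack $\frac Ns(\ln N-\ln s)=\frac Ns\ln\frac Ns \ge \frac Ns - 1$ absorbs the additive constant in most cases.

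I expect the main obstacle to be only this rounding bookkeeping (the stray $+1$ from integrality). If the two-phase split is too lossy, I would instead bound $w\le\min_k\,(k+u_k)$ directly. For this one takes $k=\lceil (N/s)\ln(N/\lceil N/s\rceil)\rceil$, or uses the sharper bound $u_k\le N(1-s/N)^k$ without passing to the exponential. Alternatively, I would choose $t=\lceil N\ln N/s\rceil$ random sets, so that the expected number of uncovered points is at most $1$, and patch the remainder. Since the lemma is explicitly a weakened form of the Sapozhenko--Lovasz--Stein bound $w\le (1+\ln \max_F|F|)\,N/s$, there is room to spare. In any case the essential content is the averaging inequality above, which follows directly from uniformity.
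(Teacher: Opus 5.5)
The paper gives no proof of this lemma; it is quoted from Sapozhenko, Stein and Jukna's book. Your greedy-plus-averaging route is the standard one behind that citation, and the key step $u_k\le u_{k-1}(1-s/N)$, obtained from uniformity via $\sum_{F\in{\cal F}}|F\cap Y|=d|Y|$, is correct.

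The proof is, however, not finished. Your two-phase estimate gives $w<1+\frac{N}{s}(1+\ln s)$. This implies $w\le\frac{N}{s}(1+\ln N)$ only when the slack $\frac{N}{s}\ln\frac{N}{s}$ is at least $1$, i.e.\ for $N/s\ge 1.763\ldots$; your inequality $x\ln x\ge x-1$ only certifies $N/s\ge 2$. For $1<N/s<1.763$ the stray $+1$ is genuinely not absorbed, and ``in most cases'' is exactly where the argument stops.

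The fallbacks are only sketched. The random one, as you describe it (expected number of uncovered points at most $1$, then patch), costs $\lceil\frac{N}{s}\ln N\rceil+1$ sets and has the same defect. Nor is ``room to spare'' relative to the Lov\'asz bound an argument unless you actually invoke that theorem.

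The missing observation is strictness plus integrality: for $k\ge 1$ and $s>0$ one has $(1-s/N)^k<e^{-ks/N}$. Take $k=\lceil\frac{N}{s}\ln N\rceil$, which is $\ge 1$ for $N\ge 2$. Then $u_k\le N(1-\frac{s}{N})^k<Ne^{-ks/N}\le 1$, so $u_k=0$ and the greedy cover is already complete. Hence
\[ w\le\Bigl\lceil\frac{N}{s}\ln N\Bigr\rceil<1+\frac{N}{s}\ln N\le\frac{N}{s}(1+\ln N), \]
because $N/s\ge 1$; for $N=1$, trivially $w=1\le N/s$. This removes phase two altogether and also covers $s=N$ without a separate case.

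The same strictness repairs the probabilistic variant. With $k$ independent uniform draws from ${\cal F}$, each point is missed with probability $(1-s/N)^k<1/N$. So some outcome of the draws covers all of $X$, with no patching needed.
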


Further, $X^{(k)}$ denotes the family of all subsets of cardinality~$k$ in $X$.

\begin{theorem}
$\rk U_n \le n^{O(1)}(3/2)^{n}.$
\end{theorem}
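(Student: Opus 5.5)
The plan is to apply Lemma~\ref{grad} separately to each set of ones of a fixed type. For $p+q\le n$ let $X_{p,q}$ be the set of pairs $(a,b)$ with $a\cap b=\varnothing$, $|a|=p$, $|b|=q$, so that $\sum_{p,q}|X_{p,q}|=3^n$. The symmetric group on $[n]$ acts transitively on $X_{p,q}$ and maps rectangles of $U_n$ to rectangles. So if I find one rectangle $R_{p,q}=A\times B$ of $U_n$ (no zeros), then the family of all images of $R_{p,q}\cap X_{p,q}$ covers $X_{p,q}$ uniformly. Lemma~\ref{grad} then gives a cover of $X_{p,q}$ by at most $(1+\ln 3^n)\,|X_{p,q}|/|R_{p,q}\cap X_{p,q}|$ rectangles. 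Summing over the $O(n^2)$ pairs $(p,q)$, it suffices to build, for every $(p,q)$, a rectangle with
\[ |R_{p,q}\cap X_{p,q}| \ \ge\ n^{-O(1)}\,|X_{p,q}|\,(2/3)^n. \]
The case $|X_{p,q}|\le n^{O(1)}(3/2)^n$ is trivial. The critical case is $p\approx q\approx n/3$, where $|X_{p,q}|\approx 3^n/\mathrm{poly}(n)$ and we need about $2^n/\mathrm{poly}(n)$ ones of weight type $(p,q)$ inside a single rectangle. By the bound of~\cite{kw15e} this is as large as a rectangle can be, so the rectangle must be essentially extremal and concentrated on one weight type.

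The naive rectangle $\{a\subseteq S\}\times\{b\subseteq [n]\setminus S\}$ does not suffice. It gives only $\binom{s}{p}\binom{n-s}{q}\approx 2^{H(2/3)n}\approx1.89^n$ ones in the central block, which leads to $1.59^n$. The rectangle therefore has to use the $*$-entries, and the tool is the 3-code trick from the proof of Theorem~\ref{upsum}. On a coordinate set $T$, take $a|_T$ from a set $R_i$ of weight-$r$ vectors of a 3-code from Lemma~\ref{ham}, and take $b|_T$ from $\ov{R_i}$. Then $a|_T$ and $b|_T$ are disjoint when they come from the same codeword and meet in $\ge2$ points otherwise. Such a piece contributes about $|R_i|\approx 2^{|T|}/\mathrm{poly}$ ones with no zeros, so it is fully efficient.

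I would build $R_{p,q}$ as a product over a partition $[n]=S\sqcup S'\sqcup T_1\sqcup\dots$, with a free part on $S$ (rows only), a free part on $S'$ (columns only), and code parts of this type. The code parts are generalized so that $|a|_T|$ and $|b|_T|$ may differ: use vectors from a 3-code in $\BB^{|T|}$ of weights $\alpha$ and $|T|-\beta$, complementing for the columns. There is one danger with a product. A single intersection point can arise in one code part while all other parts are disjoint, and this creates a $0$. To rule it out I would use a single code part on the union of all ``shared'' coordinates, so that the 3-code property gives intersection $0$ or $\ge2$ on that whole part, while the parts $S$, $S'$ are automatically disjoint. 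The count of ones of type $(p,q)$ is then a product of binomial-type factors, one per part. I would optimize the sizes and weights via the entropy estimate~(\ref{bin}) and check the target inequality above for all $(p,q)$, which in the end is a concavity inequality for $H$.

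The main obstacle is this construction step. The naive count with $T$ alone gives only $2^{2n/3}$ ones in the central block, because the remaining coordinates lie in neither set. So I must choose the code part so that each coordinate carries weight in the rows, in the columns, or in both, with the $a\cap b$ entries on $T$ landing only in $*$-positions, and still get about $2^n/\mathrm{poly}$ ones of the exact type $(n/3,n/3)$. My first attempt would take $T=[n]$ and a single 3-code class $\Gamma_i$ of Lemma~\ref{ham}. Rows would be $a\subseteq x$ with $|a|=p$ and columns $b\subseteq \ov y$ with $|b|=q$, for codewords $x,y\in\Gamma_i$ of suitable weight. I would then check whether the pairs with $x\ne y$ can be made to meet in $\ge2$ points (perhaps by requiring $a$, $b$ to contain most of $x$, $\ov y$), and verify the count via~(\ref{bin}).
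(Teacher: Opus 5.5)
Your reduction is sound and is essentially the paper's. The ones of each block $U_n^{p,q}$ form one orbit of the symmetric group, so the images of a single rectangle cover them uniformly. Lemma~\ref{grad} then reduces the theorem to exhibiting, for every $(p,q)$, one rectangle with $n^{-O(1)}|X_{p,q}|(2/3)^n$ ones of that type.

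Your template is also the right one. Every rectangle in the paper is a product of a rows-only free block $S$, a columns-only free block $S'$, and a single 3-code block $T$. On $T$ a row is a codeword, and a column is the complement in $T$ of a codeword of the same class; so row and column weights on $T$ may differ but must add up to $|T|$.

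The gap is that you never instantiate this template. You name the choice of sizes and weights as the main obstacle, and the concrete attempt you fall back on is not a rectangle. Take $T=[n]$, rows $a\subseteq x$ and columns $b\subseteq\ov y$ for distinct codewords $x,y$ of one class. The code only guarantees $|x\setminus y|\ge2$, while $a\cap b\subseteq x\setminus y$ can be a single point, e.g.\ $x\setminus y=\{e,f\}$ with $e\in a\cap b$ and $f\notin a$. Nothing in the 3-code property excludes such pairs unless $a=x$ and $b=\ov y$, i.e.\ unless $p+q=n$. Asking $a,b$ to contain ``most of'' $x,\ov y$ does not help.

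The missing idea is to make every block exactly half-full, so that each coordinate contributes a factor of about $2$. Assume $p\ge q$, put $\alpha=p/n$, $\beta=q/n$, and first let $\alpha\le1/2\le\alpha+\beta$; this range contains the critical point $p=q=n/3$. Set $x=\alpha+\beta-1/2$ and choose the blocks as follows:
$|T|=2xn$, using a largest of the at most $4xn$ 3-code classes of $T^{(xn)}$ from Lemma~\ref{ham};
$|S|=n-2q$, with rows of weight $p-xn=|S|/2$ on it;
$|S'|=n-2p$, with columns of weight $q-xn=|S'|/2$ on it.
The sizes add up to $n$, and the rectangle has $n^{-O(1)}C_{2xn}^{xn}C_{n-2q}^{(n-2q)/2}C_{n-2p}^{(n-2p)/2}=n^{-O(1)}2^n$ ones of type $(p,q)$. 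This suffices since $|X_{p,q}|\le3^n$; it is the paper's case~II.

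The other ranges need different instantiations, which you also do not supply.
For $\alpha+\beta\le1/2$, drop $T$ and take $|S|=\gamma p$, $|S'|=\gamma q$ with $\gamma=1/(\alpha+\beta)$. The cost is $n^{O(1)}C_{p+q}^{p}\le n^{O(1)}2^{n/2}$.
For $\alpha\ge1/2$, drop $S'$ and take $|T|=q+zn$ with $z=\alpha\beta/(1-\alpha)$. The columns are the weight-$q$ codewords of one class in $T$. The rows are their complements in $T$ together with an arbitrary $(p-zn)$-subset of $S=[n]\setminus T$. The choice of $z$ gives both blocks row density $\alpha$, and the cost is $n^{O(1)}C_{n-p}^{q}\le n^{O(1)}2^{n/2}$.

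Until such choices are made and checked, the ``concavity inequality for $H$'' you appeal to is only asserted. So the proposal provides the reduction and the shape of the rectangles, but not the construction that carries the proof.
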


\proof Consider again the partition of matrix $U_n$ into submatrices $U_n^{p,q}$
with row weights $p$ and column weights $q$. We will construct a covering for
each matrix $U_n^{p,q}$ independently. Without loss of generality (the argument is symmetric),
assume $p \ge q$. Denote $\alpha = p/n$ and $\beta = q/n$.

{\bf I.} Degenerate case: $\beta \le 0.1$ or $\alpha \ge 0.9$.

Even the trivial partition of a matrix $U_n^{p,q}$ by rows or by columns has rank
\[ C_n^{\min \{q,\, n-p \}} \preceq 2^{nH(0.1)} \prec 1.4^n.\]

In what follows, we assume that $\alpha, \beta \in [0.1, 0.9]$.

{\bf II.} Main case: $\alpha \le 1/2$ and $\alpha+\beta \ge 1/2$.

We define a family of rectangles in $U_n^{p,q}$. It is obtained by combining
the method of Lemma~\ref{ham} and the standard technique of separation
of variables\footnote{The latter approach allows to construct optimal coverings of matrices $S_n$, see~\cite{cils17e}.}.
Set $x = \alpha+\beta - 1/2$. Randomly partition $[n]$ into three groups:
$X_0$ of $2xn$ elements, $X_1$ of $2(p-xn)$ elements, and $X_2$ of
$2(q-xn)$ elements\footnote{From here on, we neglect parity and roundings.
Taking these factors into account only results in a refinement of the $n^{O(1)}$ factor in the estimates.}.
Applying~(\ref{code}), we obtain a partition of the family $X_0^{(xn)}$ into 3-codes:
\[ X_0^{(xn)} = \Y_1 \sqcup \Y_2 \sqcup \ldots \sqcup \Y_{4xn}. \]

The row indices of a rectangle $R_{X_1,X_2,i}$ are composed of all possible combinations 
$a_0 \cup a_1$, where $a_0 \in \Y_i$ and $a_1 \in X_1^{(p-xn)}$, and the column indices
are composed of combinations $b_0 \cup b_2$, where $b_0 \in X_0 \setminus \Y_i$ 
(the set difference operation is applied to each set in $\Y_i$) and $b_2 \in X_2^{(q-xn)}$.

The family ${\cal F} = \{ R_{X_1,X_2,i} \}$ covers the ones of $U_n^{p,q}$ 
uniformly\footnote{The family is invariant under permutations in $[n]$.}.
The average weight of a rectangle is
\[ n^{\pm O(1)} C_{2xn}^{\,xn} C_{2(p-xn)}^{\,p-xn} C_{2(q-xn)}^{\,q-xn} = n^{- O(1)} 2^n. \]
Therefore, according to Lemma~\ref{grad}, $\cal F$ contains a covering of size
\[ n^{O(1)} |U_{n}^{p,q}|/2^n \le n^{O(1)}\left(3/2\right)^{n}. \]

{\bf III.} The case $\alpha+\beta \le 1/2$.

To construct a covering family, we use only separation of variables. Denote $\gamma = \frac1{\alpha+\beta}$. 
Randomly partition $[n]$ into two parts: $X_1$ of $\gamma p$ elements and $X_2$ of $\gamma q$ elements. 
The rows and columns of a rectangle $R_{X_1}$ are numbered 
by sets from $X_1^{(p)}$ and $X_2^{(q)}$, respectively. The weight of any rectangle is
\[ n^{\pm O(1)} C_{\gamma p}^{p} C_{\gamma q}^{q} = n^{\pm O(1)} 2^{n H(\alpha+\beta)}. \]
The family ${\cal F} = \{ R_{X_1} \}$ covers the ones uniformly, hence it contains a covering of size
\[
n^{\pm O(1)} |U_{n}^{p,q}|/2^{n H(\alpha+\beta)} = n^{\pm O(1)}
C_n^{p+q} C_{p+q}^p /2^{n H(\alpha+\beta)} = n^{\pm O(1)}
C_{p+q}^p \le n^{O(1)} \sqrt2^{\,n}.
\]

{\bf IV.} The case $\alpha \ge 1/2$.

To construct the family, we employ a combination of the two methods of Theorem~\ref{upsum}. 
Let $z = \frac{\alpha\beta}{1-\alpha}$.
Randomly choose a subset~$X$ of $q+zn$ elements in~$[n]$.
Consider a partition of the family $X^{(q)}$ into 3-codes:   % (Lemma~\ref{ham})
\[ X^{(q)} = \Y^X_1 \sqcup \Y^X_2 \sqcup \ldots \sqcup \Y^X_{2(q+zn)}. \]
A rectangle $R_{X,i}$ is formed by the columns $\Y^X_i$, and its row indices are 
all possible sets $a_0 \cup a_1$ of cardinality $p$, where $a_0 \in X \setminus \Y^X_i$ 
(the set difference operation is applied to every set in $\Y^X_i$) and $a_1 \subset [n] \setminus X$.

The proposed family ${\cal F} = \{ R_{X,i} \}$ uniformly covers the ones of~$U_n^{p,q}$. 
The average weight of a rectangle is
\[ n^{\pm O(1)} C_{q+zn}^{\,zn} C_{n-q-zn}^{\,p-zn} = n^{\pm O(1)} 2^{nH(\alpha)}, \]
since $\frac{zn}{q+zn} = \frac{p-zn}{n-q-zn} = \alpha$ due to the choice of $z$. 
Lemma~\ref{grad} guarantees in ${\cal F} = \{ R_{X,i} \}$ a covering for $U_n^{p,q}$ of size
\[ n^{\pm O(1)} |U_{n}^{p,q}|/2^{nH(\alpha)} = n^{\pm O(1)} C_n^{\,p} C_{n-p}^{\,q} /2^{nH(\alpha)} = n^{\pm O(1)} C_{n-p}^{\,q} \le n^{O(1)} \sqrt2^{\,n}. \]

Summing up the ranks of all submatrices yields the final estimate. \qed


\begin{thebibliography}{99}

\bibitem{bvg26e}
Baeckelant T., Vandaele A., Gillis N. {\it Computing lower bounds on the nonnegative rank via
non-convex optimization solvers.} 2026. {\sf arXiv:2605.14058}.

\bibitem{bfps15e}
Braun G., Fiorini S., Pokutta S., Steurer D. {\it Approximation limits of linear programs $($beyond hierarchies$)$.}
Math. Operations Research. 2015. {\bf40}(3), 756--772.

\bibitem{cils17e}
Chistikov D., Iv\'an S., Lubiw A., Shallit J. {\it Fractional
coverings, greedy coverings, and rectifier networks.} Proc. STACS % Symp. on Theor. Aspects of Comput. Sci.
(Hannover, 2017). LIPIcs. 2017. {\bf66}, Art.~23.

\bibitem{gjw18e}
G\"o\"os M., Jain R., Watson T. {\it Extension complexity of independent set polytopes}.
SIAM J. Comput. 2018. {\bf47}(1), 241--269.

\bibitem{juk11e}
Jukna S. {\it Extremal combinatorics: with applications in
computer science.} Berlin, Heidelberg: Springer--Verlag, 2011.

\bibitem{kw15e}
Kaibel V., Weltge S. {\it A short proof that the extension complexity
of the correlation polytope grows exponentially.} Discrete Comput.
Geometry. 2015. {\bf53}(2), 397--401.

\bibitem{raz92e}
Razborov A. A. {\it On the distributional complexity of disjointness.}
Theoret. Comput. Sci. 1992. {\bf106}(2), 385--390.

\bibitem{sap72e}
Sapozhenko A. A. {\it On the complexity of disjunctive normal
forms obtained with a gradient algorithm.} In: Diskretnyj
Analiz [Discrete Analysis]. Vol.~21. Novosibirsk: Inst.
Matem. SO AN SSSR, 1972, 62--71. (in Russian)

\bibitem{sin17e}
Sinha M. {\it Lower bounds for approximating the matching polytope.}
Proc. SODA (New Orleans, 2018). SIAM, 2018, 1585--1604.
%2017. arXiv:1711.10145.

\bibitem{ste74e}
Stein S. K. {\it Two combinatorial covering problems.} J. Combin.
Theory~(A). 1974. {\bf16}, 391--397.

\bibitem{szu26e}
Szusterman M. {\it Markovian protocols and an upper bound on the extension
complexity of the matching polytope.} 2026. {\sf arXiv:2602.11382}.

\bibitem{we15e}
Weltge S. {\it Sizes of linear descriptions in combinatorial optimization.} PhD thesis,
Otto-von-Guericke-Universit\"at Magdeburg, 2015.

\end{thebibliography}
\end{document}